\pgfplotsset{compat=1.16}
\titlespacing*{\chapter}{0pt}{3.5ex plus 1ex minus .2ex}{2.3ex plus
  .2ex}
\newcommand*{\toccontents}{\@starttoc{toc}}
\newcommand{\refP}[1]{%
	\def\InputString{#1}%
	\IfBeginWith{\InputString}{Equation}{%
		(\ref{#1})}{%
	\IfBeginWith{\InputString}{Section}{%
		Section \ref{#1}}{%
	\IfBeginWith{\InputString}{Subsection}{%
		Subsection \ref{#1}}{%
	\IfBeginWith{\InputString}{Chapter}{%
		Chapter \ref{#1}}{%
	\IfBeginWith{\InputString}{Subsubsection}{%
		Subsubsection \ref{#1}}{%
	\IfBeginWith{\InputString}{Problem}{%
		Problem (\ref{#1})}{%
	\IfBeginWith{\InputString}{Property}{%
		property (\ref{#1})}{%
	\IfBeginWith{\InputString}{Algorithm}{%
		Algorithm \ref{#1}}{%
	\IfBeginWith{\InputString}{Figure}{%
		Figure (\ref{#1})}{%
	\IfBeginWith{\InputString}{Question}{%
		Question (\ref{#1})}{%
	\IfBeginWith{\InputString}{Footnote}{%
		Footnote \ref{#1}}{%
		\ref{#1}}}}}}}}}}}}%
}
\definecolor{TodoRed}{RGB}{150,50,0}
\def\AddBasicFunction#1#2{
	\expandafter\def\csname #1\endcsname##1{
		\def\InputString{##1}
		\def\CheckString{}%I define the String \CheckString to be an empty String.
		\ifx\InputString\CheckString %Checks if the Input is an Empty String.
			#2
		\else
			#2 \left(##1\right)
		\fi
	}
}
\def\d{\delta}
\def\e{\mathbf{e}}
\def\x{\mathbf{x}}
\def\y{\mathbf{y}}
\def\z{\mathbf{z}}
\def\u{\mathbf{u}}
\def\RR{\mathbb{R}}
\newcommand{\bigO}{\mathcal{O}}
\newcommand{\supp}{\mathcal{S}}
\newcommand{\sol}{\hat{\x}}
\def\A{\mathbf{A}}
\def\Ind{\mathbb{I}}
\newcommand{\TextForAll}{\hspace{2pt} \text{ for all } \hspace{2pt}}
\newcommand{\TextAnd}{\hspace{2pt}\text{ and }\hspace{2pt}}
\newcommand{\abs}[1]{\left|#1\right|}
\newcommand{\norm}[1]{\left\|#1\right\|}
\newcommand{\scprod}[2]{\langle#1,#2\rangle}
\newcommand{\SetOf}[1]{\left[#1\right]}
\newcommand{\argmin}[1]{\underset{#1}{\textnormal{argmin}}}
\newcommand{\SetSize}[1]{\# #1}
\newtheorem{CounterTheorem}{}[section]
\newtheorem{Definition}[CounterTheorem]{Definition}
\newtheorem{Theorem}[CounterTheorem]{Theorem}
\newtheorem{Remark}[CounterTheorem]{Remark}
\newtheorem{AlgorithmEnvirorementForNTheorem}[CounterTheorem]{Algorithm}
\renewcommand{\Vec}[1]{\mathbf{#1}}
\newcommand{\eVec}{\Vec{e}}
\newcommand{\tVec}{\Vec{t}}
\newcommand{\vVec}{\Vec{v}}
\newcommand{\xVec}{\Vec{x}}
\newcommand{\yVec}{\Vec{y}}
\newcommand{\zVec}{\Vec{z}}
\newcommand{\Mat}[1]{\mathbf{#1}}
\newcommand{\AMat}{\Mat{A}}
\newcommand{\BMat}{\Mat{B}}
\newcommand{\PMat}{\Mat{P}}
\title{Compressed sensing-based SARS-CoV-2 pool testing}
\author{Hendrik Bernd Petersen$^1$, Bubacarr Bah$^{2,3}$ and Peter Jung$^1$}
\date{
	$^1$Communications and Information Theory Group, Technische Universität Berlin \\%
	$^2$Medical Research Council Unit The Gambia, London School of Hygiene \& Tropical Medicine \\%
	$^3$African Institute for Mathematical Sciences (AIMS) South Africa \\[2ex]%
%	\today
}
\begin{document}
	\maketitle
%	\input{Chapters_Add_Text_Here/Abstract}
%	\input{Chapters_Add_Text_Here/Introduction}
%	%	\input{Chapters_Add_Text_Here/modeling_viral_detection_as_group_testing}
%	\input{Chapters_Add_Text_Here/viral_detection}
%	%	\input{Chapters_Add_Text_Here/comparison}
%	\input{Chapters_Add_Text_Here/discussion}
%	%	\input{Chapters_Add_Text_Here/mutual_coherence}
%	\input{Chapters_Add_Text_Here/mathematical_details}
%	\input{Chapters_Add_Text_Here/compressed_sensing}
%%	\input{Chapters_Add_Text_Here/noise}
%%	\input{Chapters_Add_Text_Here/potential_parts}
%%	\input{Chapters_Add_Text_Here/appendix}
%	\input{Bibliography/Acknowledgments}
%	\bibliography{Bibliography/Bibliography}
%	\bibliographystyle{IEEEtran}
%%	\bibliographystyle{plain}

\begin{abstract}
	We propose a compressed sensing-based testing approach with a practical measurement design and a tuning-free and noise-robust algorithm for detecting infected persons. Compressed sensing results can be used to provably detect a small number of infected persons among a possibly large number of people.  There are several advantages of this method compared to classical group testing.  Firstly, it is non-adaptive and thus possibly faster to perform than adaptive methods which is crucial in exponentially growing pandemic phases. Secondly, due to nonnegativity of measurements and an appropriate noise model, the compressed sensing problem can be solved with the non-negative least absolute deviation regression (NNLAD) algorithm. This convex tuning-free program requires the same number of tests as current state of the art group testing methods. Empirically it performs significantly better than theoretically guaranteed, and thus the high-throughput, reducing the number of tests to a fraction compared to other methods.  Further, numerical evidence suggests that our method can correct sparsely occurring errors.
\end{abstract}

\section{Introduction}\label{sec:intro}
Observing the situation in most countries around the world it seems that the populations may have to live with COVID-19 for at least a while. It is hoped that many readers of this article will require very little convincing about the need to continually monitor the level of infections in the populations. However, the monitoring has financial implications, in particular the cost of test kits.

This results in a strong desire to reduce the number of tests required to identify infected individuals. One candidate approach is pool testing, where test samples of individuals are pooled together and tested as one sample. If this sample turns out negative, then every individual whose sample is in the pool is declared negative. Otherwise, the group of samples can be sub-divided and re-tested. This has the potential of significantly reducing the number of tests and has been intensively investigated for COVID-19, see for example \cite{schmidt2020fact,mutesa2020strategy,verdun2021group} and the FDA granted pooled testing methods an emergency use authorization for the first time in July 2021 \cite{usfda2021fda}. 

The mathematical field of {\em group testing} is concerned with this pooling problem and has therefore also gained interest recently. It was invented by Dorfman \cite{dorfman1943the}. The methods from classical group testing often suffer from several drawbacks including slowness due to adaptivity of the tests and sensitivity to errors \cite{du2006pooling}. More on group testing in Section \ref{sec:GT}. Compressed sensing, on the other hand, is a mathematical field concerned with recovering a vector with many zero entries from as few as possible non-adaptive linear measurements \cite{donoho2006compressed}.
Compressed sensing has achieved several theoretic goals including achieving the optimal number of measurements, independence of the measurements from each other, robustness to noise  and sometimes even error correcting properties \cite{donoho2006compressed,candes2006robust}\cite{laska2009exact}.   
A more detailed discussion on compressed sensing follows in Section \ref{sec:CS}.
%after the discussion on group testing.

In this manuscript, we re-visit the use of compressed sensing for pool testing. We propose to use compressed sensing instead of classical group testing for viral detection since it tackles some weak points in group testing mentioned above. We by no means claim originality in this space, one of the first works is \cite{gilbert2008group}. However, we advocate for the use of a practical pooling strategy out of the forest of strategies proposed in the literature and the use of a new algorithm with comparable performance to the state-of-the-art in the field but having additional desirable properties that give this algorithm an advantage over other algorithms for pool testing.

The use of compressed sensing for viral detection of pools is premised on the fact that it is possible to measure the viral loads of patients. Presently, a standard test for the detection of SARS-CoV-2 is the real-time or quantitative polymerase chain reaction (RT-PCR or qPCR)\footnote{This should not be confused with a reverse transcription PCR which is often being abbreviated RT-PCR \cite{schefe2006quantitative}.}
The qPCR measurement are fluorescence intensities at every DNA amplification cycle and potentially every one of those measurements can be used to compute viral loads. 
% means that you can monitor the viral load at every cycle via fluorescence measurements. 
% 
	\begin{Remark}
		The main reason for doing pooling, be it using group testing or compressed sensing, is to reduce the number of tests needed to identify infected individuals. This is why focus in this work is on the number of pools (directly or indirectly in evaluating performance of polling designs and algorithms), typically given in complexity (``Big-O'') notation.
\end{Remark}

The paper is organized as follows. In the rest of Section \ref{sec:intro}
%, the next 2 subsections (\ref{sec:GT} and \ref{sec:CS}) 
we give a more detailed introduction to group testing and compressed sensing. 
%We start with group testing since compressed sensing could be seen as a generalization of group testing. 
In Section \ref{Section:Procedure} we discuss the modelling of the pooling process and the identification of infections. Section \ref{Section:Discussion} is a discussion on the numerical analysis. Then the paper concludes with Section \ref{Section:Mathematics} on the mathematical analysis, implied but skipped in the earlier sections.

\subsection{Group Testing}\label{sec:GT}
Group testing is concerned with discovering a class of interest from a larger set containing non-members of the class of interest without testing each member of the set. This is done by sub-dividing the set into subsets and testing each subset as one element. The goal is that with the number of subset/groups being far smaller than the total number of elements in the set, one is able to successfully identify the class of interest. More precisely, group testing seeks to identify $k$ members of a class of interest out of a set of $n$ elements by performing $m$ tests of $m$ groups, where $m\ll n$. In the disease identification setting, we would like to identify $k$ infected individuals from a population of $n$ individual performing $m$ group (pool) tests, again we require that $m\ll n$. The ratio $k/n$ (denoted as $p$) is known as the prevalence. Usually, group testing works well when $p$ is small. Is can be shown by a combinatorial argument that we need  $\bigO\left(k\log_2\left(n/k\right)\right)$ tests for pool testing to work \cite{hwang1972a}. Figure 1 shows an example of pool testing to identify an infected individual. Here each individual participates in only one pool.

\begin{figure}[h]
	\centering
	\label{fig:dorfman1}
	\includegraphics[scale=0.45]{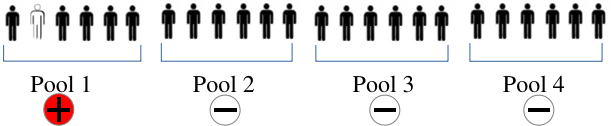}
	\caption{\small Testing a population of size $n=24$ with only 1 infected individual (in white), i.e. prevalence of $\approx 4\%$. If each individual is tested, it would required 24 tests (i.e 24 test kits). Using group testing with pool sizes of 6, members of pools 2, 3, and 4 are declared negative; while those in pool 1 are positive. Each individual in pool 1 is then re-tested, making it a total of 10 tests for the group testing as opposed to 24 tests for testing each individual.}
\end{figure}\vspace{-2mm}

We can have pool testing setups where individuals participate in more than one pool. Figure 2 illustrates this with each individual participating in 2 pools. This can be represented in a 2-dimensional (2D) grid; while the example in Figure 1 can be considered 1D. Going from 1D to 2D may reduce the need to re-test.
\begin{figure}[h!]
	\centering
	\label{fig:dorfman2}
	\includegraphics[width=0.3\textwidth]{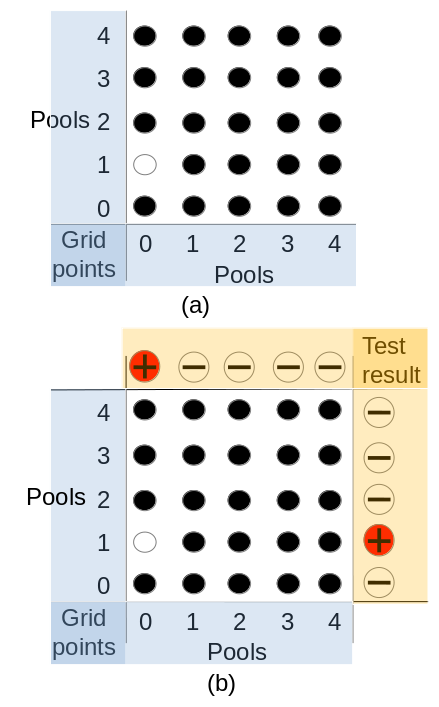}
	\caption{\small Testing a population of size $n=25$ with only 1 infected individual (white dot), i.e. prevalence of $4\%$. Individual testing would required 25 tests (i.e 25 test kits). Using group testing with pool sizes of 5 and each individual participating in 2 pools, the infected individual is identified by the row and column pools their sample is in. This pooling strategy requires only 10 tests as opposed to the 25 individual tests.} 
\end{figure}
Moreover, this can be extended to a $d$-dimensional setting, where each individual will participate in $d$ pools. This is what the successfully applied {\em hypercube} method is about, see \cite{mutesa2020strategy}\footnote{Approach being used in Rwanda.}. The 1D, 2D, $\cdots,$ dD pool testing are instances of the so-called {\em array testing}, see \cite{phatarfod1994use}.

Mathematically, we can represent the population of size $n$ by a binary vector, $\x\in\{0,1\}^n$ with $k$ ones for the infected individuals and $n-k$ zeros for the non-infected individuals. 
Pooling samples of a group together is equivalent to evaluating
\begin{equation}
	{\bf a}\circ {\bf x}:=\max_{j=1,\dots,n}\{a_jx_j\},
\end{equation} 
for some appropriate binary vector ${\bf a}\in\{0,1\}^n$.
For a matrix $\A\in\{0,1\}^{m\times n}$ whose rows ${\bf a}_i$ represent the $m$ groups,
we set 
\begin{equation}
	\y:=\A\circ \x:=\left[{\bf a}_i\circ {\bf x}\right]_{i=1,\dots,m}.
\end{equation} 
The vector $\y$ is also referred to as the measurement/observation vector. 

The recovery of the infected individuals, equivalently $\x$, from observing $\y$ and knowing $\A$ would lead to solving the following binary system of equations, since both $\A$ and $\x$ are binary.
\begin{equation}
	\label{eqn:binsys}
	\A\circ \x = \y.
\end{equation}
The setting described above where the groups are fixed is known as the {\em non-adaptive} case of group testing. Alternatively, one may only form a group after knowing the outcome of the preceding test(s). In that case we perform {\em adaptive} group testing. Non-adaptive approaches are faster but they require more measurements than adaptive approaches. Adaptive methods achieve the optimal $\bigO\left(k\log_2\left(n/k\right)\right)$ number of measurements \cite{hwang1972a}; while non-adaptive methods need $\bigO\left(k^2\log_k\left(n\right)\right)$ number of measurements \cite{ahlswede2013new}.  

Many recovery algorithms for group testing have been proposed. One popular such algorithm for non-adaptive group testing is Combinatorial Orthogonal Matching Pursuit (COMP). In the noiseless setting the COMP algorithm assigns a sample as positive (i.e., infected) if and only if all the $k$ tests containing the sample satisfy $y_m>0$. COMP will work perfectly, if the pooling strategy is such that we have ``$k$-disjunct" sets of measurements (see Definition \ref{def:disjunct}).
Another algorithm for the adaptive case, is the binary splitting algorithm by \cite{hwang1972a}. A pseudo-code for COMP is give in Algorithm \ref{algo:comp}  below. 
\begin{algorithm}
	\caption{COMP pseudo-code }
	\label{algo:comp}
	\KwData{binary test results $\y$, binary testing matrix $\A$}
	\KwResult{List of infected individuals $\supp$}
	initialize $\supp:=\emptyset$\;
	\For{$j=1,\dots,n$}{
			If for all $i$ such that $a_{i,j}=1$ we have $y_{j}=1$,
			add $j$ to $\supp$\;
		}
	\KwRet{$\supp$}
\end{algorithm} \vspace{1mm}

\subsection{Compressed Sensing}\label{sec:CS}
Mathematically, compressed sensing is concerned with the construction of a linear operator and the solution of an underdetermined system of linear equations resulting from the application of the linear operator. More precisely let $\A \in \RR^{m\times n}$ be our linear operator, let $\x\in\RR^n$ be the unknown variable vector and let the application of $\A$ result in $\y\in\RR^m$. Therefore, with $m\ll n$, we have an underdetermined linear system  (note both $\A$ and $\x$ are real):
\begin{equation}
	\label{eqn:linsys}
	\A\x = \y .
\end{equation}
In compressed sensing parlance $\A$ is called the sensing/measurement matrix, $\y$ is the measurement vector and $\x$ is the signal of interest. Linear algebra tells us that there are infinitely many solutions to \eqref{eqn:linsys}. However, it is possible to obtain a unique solution if we make some assumptions on $\A$ and $\x$. The main assumption on $\x$ is that it has some simplicity/redundancy. The simplicity of $\x$ is either {\em sparsity} or {\em compressibility}. By sparsity of $\x$ we mean it has few non-zero elements. Let $\supp$ be the {\em support} of $\x$, i.e. $\supp = \{j~:~x_j\neq 0, ~\forall j\in 1, \ldots, n\}$ and let $|\supp|\leq k$, then $\x$ is $k$-sparse. Conventionally, $\|\x\|_0$ (the $\ell_0$-norm of $\x$)\footnote{This is not a norm because $\|\lambda \x\|_0 = \|\x\|_0$ for all scalars $\lambda \neq 0$ and all vectors $\x$.} is used to represent the sparsity of $\x$. \\
Precisely,
\begin{equation}
	\label{eqn:l0norm}
	\|\x\|_0 := \sum_{j=1}^{n}\Ind_{\supp} \left(j\right),
\end{equation} 
where $\Ind$ is the indicator/characteristic function.
On the other hand, $\x$ is said to be $k$-compressible when $\x$ can be approximated quite well by a $k$-sparse vector. Furthermore, $\x = \Psi \z$ is $k$-sparse ($k$-compressible) in a basis $\Psi$ when $\z$ is $k$-sparse ($k$-compressible). We denote the restriction of $\x$ on its support as $\x_{\supp} \in \RR^k$ and the restriction of $\x$ on the complement of the support as $\x_{\supp^c} \in \RR^{n-k}$.

The assumption on $\A$ is that it is an information preserving projection or a bi-Lipschitz linear metric space embedding of all $k$-sparse vectors into $\RR^m$ {henceforth referred to as {\em a stable linear embedding}}. The conditions for $\A$ to fulfil the information preservation requirement include the following. 
\begin{enumerate}
	\item[$(i)$] {\bf Restricted Isometry Property (RIP)}. A more general definition of the RIP, i.e. $\ell_p$-norm restricted isometry property ($\mathrm{RIP}_{p}$) for $p>1$ is the following.\vspace{-2mm}
	\begin{Definition}
		Matrix $\A$ has $\mathrm{RIP}_{p}$ of order $k$, with constants $\delta_k<1$, if for all $k$-sparse $\x$, it satisfies
		\vspace{-2mm}
		\begin{equation}
			\label{eqn:rip}
			(1-\delta_k)\|\x\|_p^p \leq \|\A\x\|_p^p \leq (1 + \delta_k)\|\x\|_p^p.
		\end{equation}
	\end{Definition}\vspace{-2mm}
	For small $\d_k$, we have $\A$ being a near isometry and this implies that it a information preserving. Note that $p=2$ is typically written as RIP (without the subscript $2$). 
%	The $\A$ discussed in this manuscript are expected to have $\mathrm{RIP}_{1}$.
	\item[$(ii)$] {\bf Nullspace Property (NSP)}, here defined in the noiseless settings.\vspace{-2mm}
	\begin{Definition}
		Matrix $\A$ has the null space property of order $k$, if for any ${\bf v} \in \mbox{Null}(\A)$ and any set $\supp\subset\{1,\dots,n\}$ with $|\supp|\leq k$, we have
		\vspace{-2mm} 
		\begin{equation}
			\label{eqn:nullspace}
			\|{\bf v}_{\supp}\|_1 \leq \|{\bf v}_{\supp^c}\|_1.
		\end{equation}
	\end{Definition}\vspace{-2mm}
	If $\A$ satisfies this property, then there exist a unique $k$-sparse vector solving \eqref{eqn:linsys}. 
	\item[$(iii)$] {\bf Coherence}, which is defined as thus. \vspace{-2mm}
	\begin{Definition}
		Matrix $\A$ satisfies the coherence condition if for a $k$-sparse $\x$,
		\begin{equation}
			\mu\left(\A\right):= \max_{i\neq j} |\langle {\bf a}_i,{\bf a}_j \rangle| < 1/(2k - 1) \,
		\end{equation} 
	\end{Definition}\vspace{-2mm}
	where ${\bf a}_j$ is the $j$th column of $\A$ with normalized $\ell_2$-norm. This means that any $k\times k$ sub-matrix of $\A$ is well-conditioned, implying that $\A$ is a stable linear embedding. 
	\item[$(iv)$] {\bf $M^+$ criterion} is defined as follows.\vspace{-2mm}
	\begin{Definition}
		Matrix $\A$ obeys the $M^+$ criterion with vector ${\bf u}$ and constant $\kappa$, if $\A^T{\bf u} > 0$ and $\mathop{\displaystyle \kappa:= \max_{i\in [n]} \left|\left(\A^T\u\right)_i\right|\times \max_{i\in [n]} \left|\left(\left(\A^T\u\right)_i\right)^{-1}\right|}$. 
	\end{Definition}\vspace{-2mm}
Note that $\kappa$ is actually a condition number of the diagonal matrix with diagonal $\A^T\u$. 
%	Again this includes some kind of condition number of $\A$.
%	, hence guaranteeing the stability of embeddings by $\A$.
\end{enumerate}

An important goal of compressed sensing is to make projections by $\A$ to map to spaces with very small dimensions, i.e. small $m$, such that recovery is possible to a certain error tolerance. 
This translates to finding matrices $\A$ with small $m$, which have a good restricted isometry property or a good null space property. It is known that matrices can have $\mathrm{RIP}_{p}$ with $p=1$ or $p=2$ 
or the null space property of order $k$, if the number of measurements, $m = \bigO\left(k\log \left(n/k\right)\right)$ \cite{cohen2009compressed,berinde2008combining}. This is referred to as an {\em optimal sampling rate}.

The second part of the compressed sensing problem is the reconstruction of $\x$ from the projection $\A\x$. The system in \eqref{eqn:linsys} is underdetermined, which means $\A$ is not invertible. The aim of getting the sparsest solution that is faithful to the data leads to what is called the $\ell_0$ problem, i.e.: \vspace{0mm}
\begin{equation}
	\label{eqn:l0}
	\sol = \mathop{\mbox{argmin}}_{\z} ~\|\z\|_0 \quad \mbox{such that} \quad \A\z = \y.
\end{equation} 
\noindent This is a combinatorial and non-convex problem, which has been shown to be NP-hard to solve. However, many discrete (also referred to as `greedy') algorithms have been proposed for this problem with provable recovery guarantee. These include Iterative Hard Thresholding (IHT), Orthogonal Matching Pursuit (OMP), and Compressive Sampling Matching Pursuit (CoSaMP) \cite{foucart2013a}.
On the other hand, the $\ell_0$ problem can be relaxed to a convex one. A typical case is the Basis Pursuit (BP) algorithm that solves \vspace{0mm}
\begin{equation}
	\label{eqn:l1}
	\sol = \mathop{\mbox{argmin}}_{\z} ~\|\z\|_1 \quad \mbox{such that} \quad \A\z = \y.
\end{equation} \vspace{0mm}
It has been well established that the solution of BP coincides with the solution of the $\ell_0$ problem under most of the conditions on $\A$ and $\x$ state above.
The recovery of all compressed sensing algorithms are expected to be stable and obey an instance optimality ($\ell_p/\ell_q$-approximation) guarantee according to which any solution $\sol$ satisfies
\begin{equation}
	\|\hat{\x} - \x\|_p \leq C\frac{\sigma_k(\x)_q}{\sqrt{k}}, \quad \mbox{for} ~1\leq q \leq p \leq 2 \,
\end{equation}
where $C$ is an absolute constant independent of $\x$ and 
\begin{equation}
	\sigma_k(\x)_q := \min_{ {k-\mbox{sparse } \x'}} \|\x - \x'\|_q,
\end{equation}
that is the best $k$-term approximation $\x$. There is a more general definition of the instance optimality encompassing the noise setting too \cite{foucart2013a}.

We conclude this section by mentioning about compressed sensing with non-negativity constraints. This problem is typically formulated in the following way. \vspace{-1mm}
\begin{equation}
	\label{eqn:nnlad}
	\sol = \mathop{\mbox{argmin}}_{\z\geq 0} ~\|\A\z - \y\|.
\end{equation} \vspace{-1mm}

\noindent In the case relevant to the focus of this manuscript, which is pool testing, the norm in \eqref{eqn:nnlad} is taken to be $\|\cdot\|_1$ (i.e. the $\ell_1$-norm). In this setting \eqref{eqn:nnlad} is known as the {\em non-negative least absolute deviation} (NNLAD) problem. The authors of \cite{petersen2021efficient} proposed an efficient and tuning-free algorithm (dubbed NNLAD) for this problem. 
%See \cite{petersen2021efficient} for a pseudo-code of the NNLAD algorithm.

\section{Viral Detection in Pooled Tests}
\label{Section:Procedure}
\subsection{Testing Design}\label{Section:Design}
Suppose we want to find $k$ individuals infected with a virus among $n$ individuals. According to the information theoretic lower bound we require at least $\log\left({\binom{n}{k}}\right)$ tests to find the infected individuals. The binary splitting algorithm proposed by \cite{hwang1972a} finds the $k$ infected individuals with a number of tests in $\bigO\left(k\log_2\left({{n}/{k}}\right)\right)$. However, since the tests are adaptive, each subsequent test is designed depending on the outcome of previous tests, and thus each test has to wait for the result of previous once. If the time to perform a test is large, it might be desirable to perform multiple tests at once. In such cases non-adaptive methods are preferable. There are many deterministic non-adaptive methods for group testing. Most of these prove their results using disjunct matrices. For instance, in \cite{ahlswede2013new} a non-adaptive group testing method is presented whose number of tests is in $\bigO\left(k^2{\log_2(n)}/{\log_2(k)}\right)$. 
\par
%\textcolor{red}{STOP}
We propose to use compressed sensing with additional nonnegativity constraints to solve this problem.  We collect specimens from $n$ individuals arranged into a vector $\x\in\mathbb{R}^n$, and the amount of viruses in the specimen of the $i$th individual is denoted by the non-negative quantity $x_i\in\mathbb{N} \cup \left\{0\right\} \subset \mathbb{R}$.  We assume that the viruses are evenly distributed in each specimen, meaning that if we take $\alpha\in\left[0,1\right]$ of the volume of the specimen of the $i$th individual, it will contain roughly $\alpha x_i$ viruses.
Since, $k$ individuals are infected we have $\norm{\x}_0=k$.  
Let the sample of the $i$th test contains a fraction of $a_{ij}$ of the amount of specimen of the $j$th individual.  The sample of the $i$th test thus contains, up to rounding errors, the amount of viruses 
$$\sum_{j\in\SetOf{n}}a_{ij}x_j=\left(\A\x\right)_i =: y_i,$$ 
where $\A$ is an $m\times n$ matrix with entries $a_{ij}\in[0,1]$ with column sums of at most one. The number of tests is thus $m$. Often one makes the assumption that $\x$ is a random vector with i.i.d. elements, for instance a Poisson random variable multiplied by a Bernoulli random variable with parameter $p={k}/{n}$. Instead we take it here as deterministic unknown. It is assumed that qPCR can be used to generate an estimate $y_i$ of the amount of viruses in the $i$th test $\left(\A\x\right)_i$. This procedure is not accurate and errors $e_i:=y_i-\left(\A\x\right)_i$ might occur. We try to recover the amount of viruses in the specimen of the individuals according to
\begin{equation}
	\label{eqn:measurements}
	\y=\A\x+\e,
\end{equation}
with a possibly small $\norm{\x}_0$ and a as small as possible $m$. As discussed above, this is exactly a compressed sensing problem where, due to the nature of the problem, $\x$ is non-negative and exactly $k$-sparse (compressed sensing guarantees often also work for compressible vectors which are well approximated by sparse vectors). The theory of compressed sensing states that there exists indeed matrices and efficient decoders that allow recovery of $\x$ if $m$ is $\bigO\left(k\log{\left({n}/{k}\right)}\right)$ \cite{donoho2006compressed,candes2006robust}. It remains to design a suitable measurement/sensing matrix and a reconstruction algorithm. 
%We begin with the sensing matrix.
%\par

\subsection{Measurement Matrices}\label{Section:Matrices}
Recall from the discussion on compressed sensing above that there exists indeed matrices that allow recovery of $\x$ from \eqref{eqn:measurements} if $m$ is in $\bigO\left(k\log{\left({n}/{k}\right)}\right)$, referred to as the {\em optimal scaling}. This testing design described above is a special compressed sensing task where the elements of the matrix $\A$ and the unknown sparse vector $\x$ are non-negative. Non-negative $\A$ does not have $\mathrm{RIP}_2$ in the optimal regime and instead one has to resort to $\mathrm{RIP}_1$ or use tools like the NSP \eqref{eqn:nullspace}. 
%a more general definition can be found in \cite{foucart2013a}.  
For example, it is known that matrices with independent and uniformly on $\left\{0,1\right\}$ distributed entries achieve the optimal scaling \cite{kueng2018robust} using the NSP, while adjacency matrices of expander graphs have the optimal scaling using $\mathrm{RIP}_1$ \cite{berinde2008combining,bah2018construction}. However, deterministic, i.e. non-random, construction of such matrices is an open problem.

As far as we know,  \cite{guruswami2009unbalanced} proposes the best (in terms of optimal scaling) deterministic construction of binary matrices that could be used for compressed sensing.
%which generates expander graphs that are sufficient for compressed sensing according to \cite{foucart2013a}*{Chapter~13}. 
This construction has a near optimal scaling but is difficult to implement. Precisely, for any $\alpha>0$ there exists a constant $C_\alpha$ such that the matrix has number of rows 
\begin{equation}
	m \leq C_\alpha k^{1+\alpha}\left(\log_2(n)\log_2(k)\right)^{2+\frac{2}{\alpha}}.
\end{equation}
The downside of this result is that the constant $C_\alpha$ is rather large for small $\alpha$. 
%and thus other explicit constructions often achieve a smaller number of tests, since the dimensions for viral detection might be rather limited.
\par
For ease of implementation (hence practical reasons), we propose using the sub-optimal matrices with explicit constructions in \cite{lotfi2020compressed} and  \cite{kautz1964nonrandom}, i.e. adjacency matrices of {\em low-girth left-regular bipartite graphs} and {\em $k$-disjunct} matrices respectively, in which the number of measurements $m$ is in the order of $k\sqrt{n}$. Actually, we established the equivalence of the two constructions in Theorem \ref{Theorem:mutual_coherence=>RNSP}.
Such $k$-disjunct matrices do not achieve the optimal rate for fixed $k$ as $n\rightarrow\infty$ according to \cite{dyachkov1982bounds}. However, for viral detection we have a fixed prevalence $p={k}/{n}$ in mind. The prevalence might be larger in the beginning of a pandemic and smaller when healthcare professionals are testing regularly and asymptotically, but for each of these applications we consider a fixed $p={k}/{n}$. In this scenario our result achieves the rate 
\begin{equation}
	m = pn^\frac{3}{2}+n^\frac{1}{2},
\end{equation}
which can outperform results that achieve optimality according to \cite{dyachkov1982bounds}. For instance the construction of \cite[Corollary~1]{ahlswede2013new} achieves a number of measurements 
\begin{equation}
	\label{eqn:sub-optimal-m}
	m \leq 2k^2\frac{\log_2(n)}{\log_2(k)} = 2p^2n^2\frac{\log_2(n)}{\log_2(n)+\log_2(p)}.
\end{equation}
For a suitably chosen $n$, this upper bound and other constructions can be outperformed by the construction we propose.
\par 
	\begin{Remark}
		An alternative idea is to generate matrices at random and test them for a null space property until we find a good matrix. Since many random matrices obey the null space property in the optimal order, this approach should succeed. However such approaches fail in larger dimensions,
		since testing for a null space property or some related concepts is in general NP-hard.
\end{Remark}

\subsection{Determining Infections}\label{Section:Infections}
The classical decoding procedure for disjunct matrices (known as COMP) iterates over all $j\in\SetOf{n}$ and, if for a fixed $j$ and for all $i$ with $a_{ij}\neq0$ the $i$th test is positive, it declares the $j$th individual as infected.
This process is simple and, if $\A$ is scaled $k$-disjunct with column sums of $1$, there is no noise and there are no more than $k$ infected individuals, it is guaranteed to find exactly all infected individuals \cite{kautz1964nonrandom}. However, every false negative test will result in at least one individual that is falsely flagged as not infected. Thus, this decoding procedure is incredibly sensitive to noise. There are extensions to these matrices which tolerate a fixed number of errors under restrictive conditions \cite{ahlswede2013new}. 
\par
Compressed sensing, on the other hand, does not only detect infected individuals but also estimate the viral load, which may have further benefits to the medical practitioners. The noise vector in \eqref{eqn:measurements} is non-zero in general, since the estimate is affected by some errors including, rounding errors and inaccuracy of the qPCR. In \cite{gosh2020a} the noise vector is modelled as a heavy tailed random variable depending on the unknown quantity $\A\x$, further qPCR noise modelling can be found in \cite{petersen2021improving}.
\par
Therefore, it is difficult to apply recovery methods from compressed sensing for independent additive noise out of the box. Parameter tuning, using e.g. cross validation, is often crucial for most of these methods in these noise settings. Interestingly, non-negativity helps with such noise models. Combining the heavy-tailed noise model with non-negativity of the viral load data (i.e $\x$), we recommend using the parameter tuning free {\em Non-negative Absolute Deviation Regression} (NNLAD), proposed in \cite{petersen2021efficient}, for recovery which is any minimizer
\begin{equation}
	\sol = \argmin{\z\geq 0}\norm{\A\z-\y}_1.
	\tag{NNLAD}
\end{equation}
In \cite{petersen2021efficient} the authors showed that for certain matrices $\A$, for example what they referred to a random walk matrices of lossless expander graphs (which can be considered to include disjunct matrices), the convex NNLAD approach indeed is sparsity promoting and allows for an estimate of the form 
\begin{equation}
	\norm{\x-\sol}_1\leq C\norm{\e}_1,
\end{equation}
for some constant $C$ independent of $\e,\x,\y$ and $\sol$. 
Other compressed sensing recovery approaches (BPDN-like and LASSO) also achieve this but the constants are linked to the parameters and not necessarily good enough. Also, the rLASSO with $\ell_1$-norms in \cite{petersen2022robust} has this property but there the parameter depends on the sparsity.

Finally, given a certain threshold $\epsilon$ one declares the individual $n$ to be infected if $\hat{x}_j>\epsilon$, for small $\epsilon>0$.  If the noise level $\norm{\e}_1$ is small enough, this method will guarantee that these are exactly the infected individuals.  Thus, in this sense compressed sensing gives a non-adaptive testing procedure which provably finds $k$ infected individuals among $n$ with the scaling-optimal number of tests, and small errors would have no effect on the test result. The guarantees in \cite{petersen2021efficient} follow from a self-regularization feature in the non-negative case which has been worked out already for other cases, like the non-negative least squares in \cite{kueng2018robust}. Note that a comprehensive comparative study of the performance of NNLAD vis-a-vis other traditional compressed sensing algorithm was conducted in \cite{petersen2021efficient}.
\par
The take home message of this section is that we suggested a quasi-optimal pooling procedure with an efficient noise-robust recovery algorithm. This is a more practical setup that can be implemented in a straight forward way at a medical lab doing COVID testing. In particular, for the measurement matrix/design we trade-off optimal scaling with easy of use; while for the decoder, i.e. NNLAD, we trade-off ease of decoding for noise robustness (when compared to COMP) and for no parameter tuning (when compared to many compressing decoders).

\section{Discussion}\label{Section:Discussion}
Now we try to put the above theoretical results into the right perspective by doing a bit more detailed comparison of our proposed approach to other approaches proposed for group/pool testing. We also discuss some empirical results and then conclude.
%end the section by a conclusion.

\subsection{Non-Adaptive Methods}
Consider the non-adaptive method from \cite{ahlswede2013new} with number of tests less than $p^2n^2\frac{\log_2\left(n\right)}{\log_2\left(n\right)+\log_2\left(p\right)}$, see \eqref{eqn:sub-optimal-m}. Some empirical example comparisons are made in Table \ref{tab:nonadap} below. In many scenarios our method requires a similar amount of tests as other methods from non-adaptive group testing and sometimes even requires less tests.
%For a prevalence of $0.01$ at $900$ individuals their method uses at most $0.5573$ tests per individual while ours uses $0.3333$ tests per individual at a pool size of $30$. For a prevalence of $0.001$ at $10000$ individuals their method uses at most $0.08$ tests per individual, while ours uses $0.11$ tests per individual at a pool size of $100$. 
%\begin{table}[h!]
%	\centering
%	\textcolor{blue}{
%	\begin{tabular}{ |c|c|c|c|c| } 
%		\hline
%		{\bf Prevalence} 	& {\bf Population size}	& {\bf Pool size} & \multicolumn{2}{|c|}{\bf No. of tests per individual ($m/n$)}  \\ \cline{4-5}
%		($p=k/n$) 				&  		($n$)						&  		($Q$)			& {\bf Results in }\cite{ahlswede2013new} 				& {\bf Our results}  \\ 
%		\hline
%		0.01 	& 900 	& 30 & 0.5573 	& {\bf 0.3333} \\ 
%		0.001 & 10000  & 100?? & {\bf 0.08} 	& 0.11 \\ 
%		\hline
%	\end{tabular}}
%	\caption{\textcolor{blue}{A comparison of the method of \cite{ahlswede2013new}  to ours looking at different prevalences and population sizes.}}
%	\label{tab:nonadap}
%\end{table}
%
%
\begin{table}[h!]
	\centering
		\begin{tabular}{ |c|c|c|c|c| } 
			\hline
			{\bf Work} & {\bf Prevalence ($p$)}  & {\bf Population size ($n$)} & {\bf Pool size ($Q$)} & {\bf No. of tests per individual ($m/n$)}   \\ 
			\hline
			\cite{ahlswede2013new} & 0.01 & 900 & {\em unknown} & 0.5573 \\ 
			{\bf Ours} & 0.01 & 900 & 30 & 0.3333 \\ 
			\hline
			\cite{ahlswede2013new} & 0.001 & 10000 & {\em unknown} & 0.0800 \\ 
			{\bf Ours} & 0.001 & 10000 & 100 & 0.1100 \\ 
			\hline
	\end{tabular}
	\caption{A comparison of our method to other methods including \cite{dorfman1943the,mutesa2020strategy,zhu2020noisy}  for different prevalences and population sizes. Note that the rates ($m/n$) are based on approximations of the exact expected number of tests.}
	\label{tab:nonadap}
\end{table}

\subsection{Adaptive Methods}
Adaptive methods outperform non-adaptive methods in general. Although non-adaptive, we compare our method to other adaptive methods in Table \ref{tab:adap} below. Many adaptive group testing methods require half as much tests as our method, but ours the advantage that it only requires one stage and can thus be performed faster.
%Consider the adaptive method from \cite{dorfman1943the}. For a prevalence of $0.01$ at $900$ individuals the approach of \cite{dorfman1943the} expects $0.2$ tests per individual at a pool size of $11$ while our method uses $0.3333$ tests per individual at a pool size of $30$. For a prevalence of $0.001$ at $10000$ individuals the approach of \cite{dorfman1943the} expects $0.0628$ tests per individual at a pool size of $32$ while our method uses $0.11$ tests per individual at a pool size of $100$. The method of \cite{mutesa2020strategy} is even better than \cite{dorfman1943the}. At a prevalence of $0.01$ their method only requires $0.12$ tests per individual at a pool size of $35$, and at a prevalence of $0.001$ only requires $0.018$ tests per person at a pool size of $350$. 
Consider the adaptive method from \cite{zhu2020noisy}. For a prevalence of $0.01$ and at $5000$ individuals their method uses $0.3$ tests per person. 
%Our method can handle a prevalence of $0.01$ at $900$ individuals with the same number of tests per person. 
\begin{table}[h!]
	\centering
	\begin{tabular}{ |c|c|c|c|c| } 
		\hline
		{\bf Work} & {\bf Prevalence ($p$)}  & {\bf Population size ($n$)} & {\bf Pool size ($Q$)} & {\bf No. of tests per individual ($m/n$)}   \\ 
		\hline
		\cite{dorfman1943the} & 0.01 & {\em independent of n} & 10 & 0.2000 \\ 
		\cite{mutesa2020strategy} & 0.01 & {\em independent of n} & 35 & 0.1168 \\ 
%		\cite{zhu2020noisy} & 0.01 & 5000 & 80 & 0.3 \\ 
		{\bf Ours} & 0.01 & 900 & 30 & 0.3333 \\ 
		\hline
		\cite{dorfman1943the} & 0.001 & {\em independent of n} & 32 & 0.0633 \\ 
		\cite{mutesa2020strategy} & 0.001 & {\em independent of n} & 350 & 0.0179 \\ 
		{\bf Ours} & 0.001 & 10000 & 100 & 0.1100 \\ 
		\hline
	\end{tabular}
	\caption{A comparison of our method to other methods including \cite{dorfman1943the,mutesa2020strategy}  for different prevalences and population sizes. Note: (i) that the rates ($m/n$) are based on approximations of the exact expected number of tests; (ii) where these approximations can be computed without involving $n$, we put {\em independent of n} under the ``Population size ($n$)'' column.}
	\label{tab:adap}
\end{table}

\subsection{Error Correcting Properties}
We demonstrate an advantage of the NNLAD approach over other compressed sensing decoders in a short experiment with synthetic data. 
%This data is corrupted by a sparse error vector $\eVec$. 
One common error model is to use an multiplicative noise, i.e. $y_m= g_m\times (\A\x)_m$ for some random variable $g_m$. This yields that the magnitude of certain noise components are significantly larger than others and you end up with a peaky noise. This motivates us to approximatively model the additive noise vector $\eVec:=\y-\A\x$ as a sparse random vector. Following \cite[Theorem~4.5]{petersen2020practical} for group sizes of $31$ and $k=7$, we are guaranteed to detect up to $7$ infected people among $961$ by using $248$ tests. However, empirically the identification will succeed even if more than $7$ individuals are infected and even if multiple measurements are corrupted.
In Figure 3 we vary the prevalence $p=\frac{\norm{\xVec}_0}{n}$ and the fraction of corrupted measurements
$p_e=\frac{\norm{\eVec}_0}{m}$ and plot the probability that the NNLAD estimator $\sol$ is sufficiently close to the true signal $\xVec$ in the $\ell_1$-norm.
\begin{figure}[ht]
	\centering
	\label{fig:numerics}
	\includegraphics[scale=0.5]{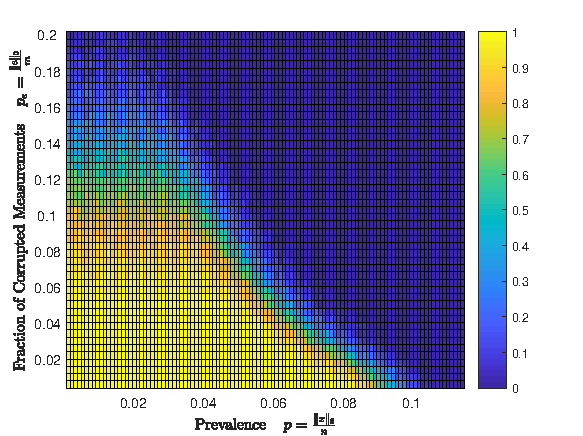}
	\caption{A phase-transition plot of the probability of recovery as a function of the prevalence and the number of corrupted measurements for pool sizes of $31$ and $k=7$.}
\end{figure}

We see that as guaranteed by \cite{petersen2020practical}*{Theorem~4.5} for $p=\frac{\norm{\xVec}_0}{n}\leq\frac{7}{31^2}\approx 0.0073$ and $p_e=0$, the recovery succeeds. But empirically the recovery also succeeds for $p\leq 0.08$ and $p_e=0$, i.e. 10 times higher than guaranteed. This suggests that $k$ and thus $m$ could be reduced and still recover whenever $p\leq 0.02$ for instance, which might be sufficient for a prevalence of $p=0.01$.
\par
Further the NNLAD seems to successfully recover even in the presence of sparse noise, i.e. when $p_e$ is small. This is not surprising as the NNLAD minimizes the $\ell_1$ norm of all possible noises and $\ell_1$-minimization is sparsity promoting \cite{candes2005decoding}.
Recovery seems to succeeds whenever $p_e\leq 0.06$ and $\frac{4}{3}p_e+p\leq 0.08$. This error correcting property gives the NNLAD decoding approach advantages over other compressed sensing decoders in the presence of heavy outliers.
\par
However, the error correcting properties cannot be guaranteed uniformly for all $\e$ with $p_e\leq 0.06$. If the noise components are active exactly on the support of a column of $\A$ where the signal $\x$ is non-vanishing, the measurement might corresponds to a different signal with the same support.  Hence, recovery might fail as soon as $\norm{\e}_0$ is at least as large as the number of non-zero entries in a column of $\A$, in this case $k+1=8$.
However, there are $\binom{m}{k+1}=\binom{248}{8}\approx 3.16\cdot 10^{14}$ different possible supports of a $k+1$-sparse noise, but only $n=961$ of those appear as columns of the matrix. Thus, if the noise support is drawn uniformly at random, such an event is highly unlikely. Thus, we see that the NNLAD is empirically correcting more errors than expected.

\subsection{Conclusion}
We have explained how compressed sensing can be used to solve the viral detection problem. It generates a non-adaptive testing procedure. 
Further, we have presented a construction of design matrices that can be used for classical non-adaptive group testing and compressed sensing based viral detection. The construction requires roughly as many tests as other methods from non-adaptive group testing and can possibly outperform those. Adaptive group testing methods still require less tests but can only be performed sequentially which is a critical problem in a pandemic with an exponential growth rate. We have proposed to use the NNLAD as a compressed sensing decoder, since, compared to other compressed sensing decoders, it is robust against heavy tailed noise and does not requires knowledge of noise level. In particular, it also has certain error correcting properties. Lastly our method also computes the viral load of infected individuals.

\section{Mathematical Details}\label{Section:Mathematics}
\subsection{Disjunct Matrices and Null Space Properties}
Classical, deterministic, non-adaptive group testing often makes use of so called disjunct matrices, defined below.
\begin{Definition}
	\label{def:disjunct}
	Let $\AMat\in\left\{0,1\right\}^{m\times n}$ and $k \in\SetOf{n}$ and set $A^j:=\left\{i\in\SetOf{m}:a_{ij}=1\right\}$. Suppose that
	\begin{align}
		A^j\setminus\bigcup_{j'\in T}A^{j'}\neq \emptyset
		\TextForAll
		T\subset\SetOf{n},\SetSize{T}\leq k
		\TextAnd j\in\SetOf{n}\setminus T
	\end{align}
	holds true. Then, $\AMat$ is called $k$-disjunct.
\end{Definition}
The classical decoding procedure for a disjunct matrices iterates over all $j\in\SetOf{n}$ and, if for all $i\in A^j$ the $i$-th test is positive,
it declares the $j$-th individual as infected. This process is simple and, if $\AMat$ is $k$-disjunct and there are no more than $k$ infected individuals, it is guaranteed to find exactly all infected individuals \cite{kautz1964nonrandom}.
However, every false negative test will result in at least one individual that is falsely flagged as not infected. Thus, this decoding procedure is incredibly sensitive to noises. There are extensions to these matrices which can tolerate a fixed number of errors by forcing the set $A^j\setminus\bigcup_{j'\in T}A^{j'}$ not only to be non-empty but also sufficiently large \cite{ahlswede2013new}.
Compressed sensing on the other hand makes use of matrices that have a robust null space property.
\begin{Definition}
	Let $k \in\SetOf{n}$, $\rho\in\left[0,1\right)$, $\tau\in\left[0,\infty\right)$ and  $\AMat\in\mathbb{R}^{m\times n}$. Suppose that
	\begin{align}\label{Equation:NSP:Theorem:Recovery}
		\sum_{j\in T}\abs{v_j}
		\leq \rho
		\sum_{j\notin T}\abs{v_j}
		+ \tau\norm{\AMat\vVec}_1
		\TextForAll \vVec\in\mathbb{R}^n,
		T\subset\SetOf{n},\SetSize{T}\leq k
	\end{align}
	holds true. Then we say $\AMat$ has the robust null space property 	of order $k$ with constants $\rho$ and $\tau$.
\end{Definition}
If $\AMat$ has the robust null space property of order $k$ any vector $\xVec$ with at most $k$ non-zero components can be recovered
from $\AMat\xVec$ by solving a linear program and, in the presence of noise, the estimation error is bounded by a constant times the norm of the noise \cite[Chapter 4]{foucart2013a}.
\par
It is quite suprising that there is a common way to generate $k$-disjunct matrices and matrices which have the robust null space property of order $k$. This was proven in \cite[Equation~(4)]{kautz1964nonrandom} and \cite{lotfi2020compressed}.

\begin{Theorem}[{\cite[Equation~(4)]{kautz1964nonrandom}} \& {\cite{lotfi2020compressed}}]
	\label{Theorem:mutual_coherence=>RNSP}
	Let $\AMat\in\left\{0,1\right\}^{m\times n}$ be a binary matrix with columns $\AMat^j$ that all have exactly $D$ ones. 	Let $\lambda:=\max_{j,j'\in\SetOf{n}:j\neq j'}\scprod{\AMat^j}{\AMat^{j'}}$ 	and set $k =\left\lceil\frac{D-1}{\lambda}\right\rceil$. Then, $\AMat$ is $k$-disjunct and has the robust null space property of order $k$ with constants $\rho:=\frac{k}{\frac{2D}{\lambda}-k}$ and
	$\tau:=\frac{k\left(\frac{2D}{\lambda}+1\right)}{\frac{2D}{\lambda}-k}\norm{\AMat^\dagger}_{1\rightarrow 1}$.
\end{Theorem}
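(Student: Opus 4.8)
The plan is to reduce both assertions to two elementary facts that the hypotheses encode about the columns: since the entries lie in $\left\{0,1\right\}$ one has $\scprod{\AMat^n}{\AMat^n}=\SetSize{A^n}=D$ on the diagonal and $\scprod{\AMat^n}{\AMat^{n'}}=\SetSize{\left(A^n\cap A^{n'}\right)}\leq\lambda$ off it. Disjunctness is then the easy half. For $T\subset\SetOf{N}$ with $\SetSize{T}\leq S$ and $n\notin T$, a union bound on supports gives
\begin{align*}
\SetSize{\left(A^n\cap\bigcup_{n'\in T}A^{n'}\right)}\leq\sum_{n'\in T}\SetSize{\left(A^n\cap A^{n'}\right)}\leq S\lambda.
\end{align*}
For the set difference $A^n\setminus\bigcup_{n'\in T}A^{n'}$ to be non-empty — which is the definition of $S$-disjunctness — it suffices that $S\lambda<D=\SetSize{A^n}$, and this inequality is exactly the role played by the threshold defining $S$ (I would double-check it against the rounding convention in $S=\RoundUp{(D-1)/\lambda}$).

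For the robust null space property I would work with the Gram matrix $\AMat^\top\AMat$ entrywise. For any $\vVec\in\mathbb{R}^N$ and any index $n$,
\begin{align*}
Dv_n=\scprod{\AMat^n}{\AMat\vVec}-\sum_{n'\neq n}\scprod{\AMat^n}{\AMat^{n'}}v_{n'},
\end{align*}
so that, bounding the off-diagonal inner products by $\lambda$,
\begin{align*}
D\abs{v_n}\leq\abs{\scprod{\AMat^n}{\AMat\vVec}}+\lambda\sum_{n'\neq n}\abs{v_{n'}}.
\end{align*}
Summing over $n\in T$ and splitting the double sum into its $n'\in T$ part (each $\abs{v_{n'}}$, $n'\in T$, is hit at most $S-1$ times) and its $n'\notin T$ part (hit at most $S$ times) produces an inequality of the form
\begin{align*}
\bigl(D-\lambda(S-1)\bigr)\sum_{n\in T}\abs{v_n}\leq\sum_{n\in T}\abs{\scprod{\AMat^n}{\AMat\vVec}}+\lambda S\sum_{n\notin T}\abs{v_n}.
\end{align*}
Dividing by $D-\lambda(S-1)$, positive precisely because $S\lambda<D$, already yields a robust null space property; what remains is to put the two constants into the stated shape.

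The presence of $\norm{\AMat^\dagger}_{1\rightarrow 1}$ in $\tau$ tells me the fidelity term must be routed through the pseudoinverse rather than bounded crudely. Concretely, I would split $\vVec=\AMat^\dagger\AMat\vVec+\wVec$ with $\wVec\in\ker\AMat$, apply the coherence inequality above to the kernel part $\wVec$ (where $\scprod{\AMat^n}{\AMat\wVec}=0$, giving the pure null space estimate), and control the complementary part through $\norm{\AMat^\dagger\AMat\vVec}_1\leq\norm{\AMat^\dagger}_{1\rightarrow 1}\norm{\AMat\vVec}_1$, then recombine by the triangle inequality. As a sanity check I would verify that $\rho<1$ is equivalent to $S<D/\lambda$, i.e. to the very coherence condition $S\lambda<D$ that powered the disjunctness, so that both halves of the theorem rest on a single inequality.

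I expect the constant bookkeeping to be the real obstacle. The naive grouping above leaves the denominator $D-\lambda(S-1)=\lambda\left(\tfrac{D}{\lambda}-S+1\right)$, whereas the statement advertises the strictly larger $2D-\lambda S=\lambda\left(\tfrac{2D}{\lambda}-S\right)$, i.e. a genuinely sharper constant. Matching it exactly — and simultaneously producing $\tau=\frac{S\left(2D/\lambda+1\right)}{2D/\lambda-S}\norm{\AMat^\dagger}_{1\rightarrow 1}$ — will require the more careful accounting of the within-support, off-support and projection contributions carried out in the cited sources, rather than the single union bound I used here for orientation.
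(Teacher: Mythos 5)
Your proposal's scaffolding actually matches the paper's proof in two places: the paper likewise does \emph{not} prove disjunctness but cites it from \cite[Equation~(4)]{mutual_coherence_yields_disjunctness}, and its null space argument uses exactly the splitting you describe at the end --- $\vVec=\vVec'+\vVec''$ with $\vVec'$ the projection onto $\ker\AMat$, the kernel part handled by a pointwise coherence-type bound, and the complement handled via $\norm{\vVec''}_1=\norm{\AMat^\dagger\AMat\vVec''}_1\leq\norm{\AMat^\dagger}_{1\rightarrow 1}\norm{\AMat\vVec}_1$. Your union-bound proof of disjunctness is sound in spirit but, as you suspected, only yields the order $\lfloor (D-1)/\lambda\rfloor$: when $\lambda\nmid D-1$ one has $S\lambda\geq D$ for $S=\RoundUp{(D-1)/\lambda}$, so the inequality $S\lambda<D$ you rely on fails; the ceiling version must come from the cited source.

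The genuine gap is in the null space property, and it is mischaracterized as ``constant bookkeeping.'' Applying your Gram-matrix estimate to a kernel vector $\wVec$ gives $D\abs{w_n}\leq\lambda\left(\norm{\wVec}_1-\abs{w_n}\right)$, i.e. $\abs{w_n}\leq\frac{\lambda}{D+\lambda}\norm{\wVec}_1$, and your summed inequality correspondingly carries the denominator $D-\lambda(S-1)$. This proves a robust null space property (i.e. $\rho<1$) only for $S<\frac{D}{2\lambda}+\frac{1}{2}$, roughly \emph{half} the order claimed: at the theorem's own $S=\RoundUp{(D-1)/\lambda}\approx D/\lambda$ your constants give $\rho\geq 1$, so your inequality is not an RNSP of order $S$ at all --- the conclusion is qualitatively, not just quantitatively, out of reach of the naive coherence bound. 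The entire content of the theorem is the factor-two improvement $\abs{v'_n}\leq\frac{\lambda}{2D}\norm{\vVec'}_1$, valid specifically for vectors in the null space of a binary matrix with constant column weight; this is \cite[Theorem~7]{BinaryMatrices} (or \cite[Theorem~2]{other_girth}), a combinatorial fact about binary matrices that cannot be extracted from the off-diagonal Gram bound alone. The paper's proof is precisely: that cited pointwise bound, plus your pseudoinverse splitting, plus \cite[Lemma~2]{BinaryMatrices} to convert the resulting pointwise estimate $\abs{v_n}\leq\frac{\lambda}{2D}\norm{\vVec}_1+\left(\frac{\lambda}{2D}+1\right)\norm{\AMat^\dagger}_{1\rightarrow 1}\norm{\AMat\vVec}_1$ into the stated $\rho$ and $\tau$ (using $S<\frac{D}{\lambda}$). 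By deferring exactly these two cited ingredients to ``the more careful accounting carried out in the cited sources,'' you have deferred the one idea the proof turns on; what you do prove independently, the paper obtains in passing.
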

\begin{proof}
	Note that $\lambda$ is also the maximal number of common ones of two distinct columns of $\AMat$. According to
	\cite[Equation~(4)]{kautz1964nonrandom} $\AMat$ is at least $k$-disjunct. For any $\vVec\in\mathbb{R}^n$ let $\vVec=\vVec'+\vVec''$ where $\vVec'$ is the projection of $\vVec$ onto the null space of $\AMat$ and $\vVec''$ is orthogonal to the null space. Since the Moore-Penrose inverse composed with the matrix itself is the identity on the orthogonal complement of the null space, we get
	\begin{align}\label{Equation:EQMoorePenrose:Theorem:Main}
		\norm{\vVec''}_1=\norm{\AMat^\dagger\AMat\vVec''}_1  \leq\norm{\AMat^\dagger}_{1\rightarrow 1}\norm{\AMat\vVec''}_1
		=\norm{\AMat^\dagger}_{1\rightarrow 1}\norm{\AMat\vVec}_1.
	\end{align}
	By \cite[Theorem~2]{liu2013reconstruction} or \cite[Theorem~7]{lotfi2020compressed} for any $j\in\SetOf{n}$ we have 
	$\abs{v'_j}\leq\frac{\lambda}{2D}\norm{\vVec'}_1$.
	It follows that
	\begin{align}
		\abs{v_j}
		\leq&\abs{v'_j}+\abs{v''_j}
		\leq\abs{v'_j}+\norm{\vVec''}_1
		\leq\frac{\lambda}{2D}\norm{\vVec'}_1+\norm{\vVec''}_1
		\leq\frac{\lambda}{2D}\norm{\vVec}_1
		+\left(\frac{\lambda}{2D}+1\right)\norm{\vVec''}_1
		\\\label{Equation:EQ:1:Theorem:Main}
		\leq&\frac{\lambda}{2D}\norm{\vVec}_1
		+\left(\frac{\lambda}{2D}+1\right)\norm{\AMat^\dagger}_{1\rightarrow 1}\norm{\AMat\vVec}_1,
	\end{align}
	where the last inequality follows from \refP{Equation:EQMoorePenrose:Theorem:Main}. We set $\alpha:=\frac{2D}{\lambda}$ and
	$\beta:=\left(\frac{\lambda}{2D}+1\right)\norm{\AMat^\dagger}_{1\rightarrow 1}$. Using \refP{Equation:EQ:1:Theorem:Main}
	for all $j\in\SetOf{n}$ and $\vVec\in\mathbb{R}^n$ and noting that $k=\left\lceil\frac{D-1}{\lambda}\right\rceil<\frac{\alpha}{2}$ holds true, allows us to apply \cite[Lemma~2]{lotfi2020compressed}. This yields that $\AMat$ has the robust null space property of order $k$ with constants $\frac{k}{\alpha-k}=\frac{k}{\frac{2D}{\lambda}-k}=\rho$, and
	\begin{align}
		\frac{\alpha\beta k}{\alpha-k}
		=\frac{k\left(1+\frac{2D}{\lambda}\right)}{\frac{2D}{\lambda}-k}\norm{\AMat^\dagger}_{1\rightarrow 1}=\tau.
	\end{align}
	This completes the proof.
\end{proof}

This yields that a lot of test schemes for non-adaptive group testing can already be used with the compressed sensing decoding strategy presented in \refP{Section:Procedure}. Using this method we cannot improve the amount of tests required directly, but, the compressed sensing decoder we propose is significantly more robust to noise. We will see empirically that the compressed sensing decoder will correctly identify all infected individuals even in the presence of one false negative test, unlike the classical group testing decoder for disjunct matrices.
\par
Note that under the assumptions of the theorem the quantity $\frac{\lambda}{D}$ is one over the mutual coherence of the matrix (after suitable
normalization). The mutual coherence is a general tool in compressed sensing and it is known that the order of the null space property has to be at least in the order of one over the mutual coherence \cite[Equation~(9)]{gribonval2003sparse}. In the special case of binary matrices this result can be refined to account for a better constant using \cite{lotfi2020compressed}.

\subsection{A Recovery Guarantee from Compressed Sensing}
\label{Section:Recovery_Guarantee}
We introduce the result \cite[Theorem~3.4]{petersen2021efficient} with some fixed parameters that we will combine with some binary matrices from \cite[Page~3015]{lotfi2020compressed} that have a high number of ones per column but a small inner product between distinct columns.
\begin{Theorem}[{\cite[Theorem~3.4]{petersen2021efficient}} ]\label{Theorem:Recovery}
	Let $k \in\SetOf{n}$ and $\AMat\in\mathbb{R}^{m\times n}$. Suppose that the following holds true:
	\begin{itemize}
		\item[(1)] $\AMat$ has the robust null space property of order $S$ with constants $\rho\in\left[0,1\right)$ and $\tau\in\left[0,\infty\right)$.
		\item[(2)] There exists some vector $\tVec\in\mathbb{R}^m$ such that $\AMat^T\tVec>0$ and $\kappa\rho<1$, where $\kappa:=\frac{\max_{j\in\SetOf{n}}\abs{\left(\AMat^T\tVec\right)_j}}{\min_{j\in\SetOf{n}}\abs{\left(\AMat^T\tVec\right)_j}}$.
	\end{itemize}
	Then, the for all $\xVec\in\mathbb{R}^n,\xVec\geq 0,\norm{\xVec}_0\leq k$, $\eVec\in\mathbb{R}^m$ and $\yVec:=\AMat\xVec+\eVec$ we have that any minimizer $\xVec^\#$ of
	\begin{align}
		\min_{\zVec\geq 0}\norm{\AMat\zVec-\yVec}_1
	\end{align}
	obeys
	\begin{align}
		\norm{\xVec-\xVec^\#}_1
		\leq 2\left(\frac{1+\kappa\rho}{1-\kappa\rho}\max_{j\in\SetOf{n}}\abs{\left(\AMat^T\tVec\right)_n}\norm{\tVec}_\infty+\frac{2}{1-\kappa\rho}\kappa\tau\right)\norm{\eVec}_1.
	\end{align}
\end{Theorem}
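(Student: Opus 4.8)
The plan is to run the standard null-space-property error analysis on the error vector $\vVec := \xVec^\# - \xVec$, but with the two features special to this nonnegative, residual-minimizing setting playing the roles that $\ell_1$-minimality plays in Basis Pursuit. Write $T := \supp{\xVec}$, so that $\SetSize{T} \leq S$, and split the bookkeeping into the on-support mass $a := \sum_{n \in T}\abs{v_n}$ and the off-support mass $b := \sum_{n \notin T}\abs{v_n}$. The goal is to bound $\norm{\vVec}_1 = a + b$ by a multiple of $\norm{\AMat\vVec}_1$ and then to convert $\norm{\AMat\vVec}_1$ into $\norm{\eVec}_1$.

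First I would record the two elementary consequences of the hypotheses that drive everything. From nonnegativity, for every $n \notin T$ we have $x_n = 0$ and hence $v_n = x^\#_n \geq 0$; thus the off-support part of $\vVec$ is entrywise nonnegative, so $b = \sum_{n \notin T} v_n$. From optimality of $\xVec^\#$ for the NNLAD program, together with feasibility of $\xVec \geq 0$, I get $\norm{\AMat\xVec^\# - \yVec}_1 \leq \norm{\AMat\xVec - \yVec}_1 = \norm{\eVec}_1$, and a triangle inequality then yields the residual bound $\norm{\AMat\vVec}_1 \leq 2\norm{\eVec}_1$. This last bound is where the whole argument ultimately connects to the noise level, and it is the place where residual-minimization (rather than $\ell_1$-minimization) is used.

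Next I would set up the two competing inequalities relating $a$ and $b$. The robust null space property applied to $\vVec$ and $T$ gives directly $a \leq \rho b + \tau\norm{\AMat\vVec}_1$. For the reverse control I would test $\vVec$ against the dual vector: since $\scprod{\AMat^T\tVec}{\vVec} = \scprod{\tVec}{\AMat\vVec} \leq \norm{\tVec}_\infty\norm{\AMat\vVec}_1$, and since every entry of $\AMat^T\tVec$ is positive and lies between $\min_n(\AMat^T\tVec)_n$ and $\max_n(\AMat^T\tVec)_n$, I would split $\scprod{\AMat^T\tVec}{\vVec}$ over $T$ and its complement. On the complement I use $v_n \geq 0$ and estimate from below by $\min_n(\AMat^T\tVec)_n\cdot b$, while the sign-indefinite on-support part is bounded from below by $-\max_n(\AMat^T\tVec)_n\cdot a$. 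This produces a cone-type inequality $b \leq \kappa a + \frac{\norm{\tVec}_\infty}{\min_n(\AMat^T\tVec)_n}\norm{\AMat\vVec}_1$. This is the substitute for the inequality $\norm{\vVec_{\SetOf{N}\setminus T}}_1 \leq \norm{\vVec_T}_1 + \dots$ that one would obtain for free from $\ell_1$-minimality in Basis Pursuit; here it is positivity of $\AMat^T\tVec$ that supplies it, and the price is the coherence-type factor $\kappa$.

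Finally I would combine the two inequalities. Substituting the cone bound into the null space inequality turns the coefficient of $a$ on the right-hand side into $\kappa\rho$, and the hypothesis $\kappa\rho < 1$ is exactly what is needed to absorb this term and solve for $a$, and then for $b$, each as a multiple of $\norm{\AMat\vVec}_1$. Summing $\norm{\vVec}_1 = a + b$ and inserting $\norm{\AMat\vVec}_1 \leq 2\norm{\eVec}_1$ yields the claimed estimate; the factors $\frac{1+\kappa\rho}{1-\kappa\rho}$ and $\frac{2\kappa}{1-\kappa\rho}$ emerge from tracking the constants through this elimination. I expect the only genuinely delicate point to be the dual-certificate step: one must split the inner product so that the nonnegativity of $v_n$ off the support is paired with the smallest entry $\min_n(\AMat^T\tVec)_n$ while the on-support part is controlled by the largest entry, since it is precisely this asymmetry that makes $\kappa$ appear and that forces the threshold $\kappa\rho < 1$. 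The remaining manipulations are routine bookkeeping.
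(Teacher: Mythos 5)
Your argument is correct, and it is worth noting at the outset that this paper never proves Theorem \thref{Theorem:Recovery} itself: the result is imported wholesale from \cite[Theorem~3.4]{NNLAD}. So the relevant comparison is with that external proof, and your reconstruction uses exactly its three load-bearing ingredients: (i) feasibility of $\xVec$ plus optimality of $\xVec^\#$ give the residual bound $\norm{\AMat\left(\xVec^\#-\xVec\right)}_1\leq 2\norm{\eVec}_1$; (ii) the robust null space property controls the on-support mass $a$ by the off-support mass $b$; (iii) the certificate $\AMat^T\tVec>0$, paired with the entrywise nonnegativity of $\vVec:=\xVec^\#-\xVec$ off the support of $\xVec$, supplies the cone inequality $b\leq\kappa a+\frac{\norm{\tVec}_\infty}{\min_n\left(\AMat^T\tVec\right)_n}\norm{\AMat\vVec}_1$ that $\ell_1$-minimality would supply in Basis Pursuit. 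The condition $\kappa\rho<1$ enters exactly where you say it does, and your bookkeeping is sound.

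The one point to flag is the final constant. Your elimination actually yields
\begin{align}
	\norm{\xVec-\xVec^\#}_1
	\leq \frac{2}{1-\kappa\rho}\left(\left(1+\rho\right)\frac{\norm{\tVec}_\infty}{\min_{n\in\SetOf{N}}\abs{\left(\AMat^T\tVec\right)_n}}+\left(1+\kappa\right)\tau\right)\norm{\eVec}_1,
\end{align}
whose first term carries the scale-invariant factor $\norm{\tVec}_\infty/\min_n\abs{(\AMat^T\tVec)_n}$, whereas the theorem as printed carries $\max_n\abs{(\AMat^T\tVec)_n}\cdot\norm{\tVec}_\infty$. These are genuinely different objects: the printed factor scales quadratically under $\tVec\mapsto c\tVec$ while condition (2) and $\kappa$ are invariant, so the printed bound would allow sending its first term to zero by taking $c\to 0$, leaving $\norm{\xVec-\xVec^\#}_1\leq\frac{4\kappa\tau}{1-\kappa\rho}\norm{\eVec}_1$ --- which is false in general (e.g.\ append a row $\delta\Vec{1}^T$, $\delta$ small, to a matrix having the null space property whose kernel contains the all-ones vector; then $\kappa=1$ with $\tVec$ a multiple of the last standard basis vector, yet taking $\xVec=0$ and $\yVec=\AMat\Vec{1}$ makes $\xVec^\#=\Vec{1}$ a minimizer with $\norm{\eVec}_1=\delta N$ but $\norm{\xVec-\xVec^\#}_1=N$). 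In other words, the printed statement contains a transcription error and your form is the defensible one; moreover, since $\kappa\geq 1$ gives $1+\rho\leq\left(1+\kappa\rho\right)\kappa$ and $1+\kappa\leq 2\kappa$, your constant is also at least as sharp under the normalization $\min_n\left(\AMat^T\tVec\right)_n=1$. In the only place the paper applies the theorem, namely \thref{Theorem:Main}, one has $\AMat^T\tVec=\Vec{1}$, $\kappa=1$ and $\norm{\tVec}_\infty=1$, where both forms coincide, so nothing downstream is affected. One last small correction to your own write-up: the factors you predict at the end, $\frac{1+\kappa\rho}{1-\kappa\rho}$ and $\frac{2\kappa}{1-\kappa\rho}$, are not what your elimination produces; it produces the slightly smaller $\frac{1+\rho}{1-\kappa\rho}$ and $\frac{1+\kappa}{1-\kappa\rho}$, which is fine since you are proving an upper bound.
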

We now have the necessary tools to proof the main result.
\begin{Theorem}\label{Theorem:Main}
	Let $k \in\mathbb{N}$ and $Q>k$ be a prime number. Let $\PMat\in\mathbb{R}^{Q\times Q}$ with $\PMat_{q+1,q}=1$ for all $q\in\SetOf{Q-1}$, $\PMat_{1,Q}=1$ and zero else. The block partitioned matrix $\AMat=\left(\AMat_{s,q}\right)_{s\in\SetOf{k+1},q\in\SetOf{Q}} \in\mathbb{R}^{\left(k+1\right)Q\times Q^2}$ with blocks $\AMat_{s,q}=\left(k+1\right)^{-1}\PMat^{\left(s-1\right)\left(q-1\right)}$ for all $s\in\SetOf{k+1},q\in\SetOf{Q}$ obeys the following:
	\begin{enumerate}
		\item The entries of $\AMat$ are either $\left(k+1\right)^{-1}$ or zero.
		\item The columns of $\AMat$ sum up to one.
		\item $\AMat$ has exactly $k+1$ non-zero entries per column and $Q$ non-zero entries per row.
		\item $\left(k+1\right)\AMat\in\left\{0,1\right\}^{m\times n}$ is $k$-disjunct.
		\item
		The matrix $\AMat$ obeys the following identification property:
	\end{enumerate}
	For all $\xVec\in\mathbb{R}^n_+$ with $\norm{\xVec}_0\leq k$, $\yVec\in\mathbb{R}^m$ and $\eVec:=\yVec-\AMat\xVec$ any minimizer $\xVec^\#$ of $\min_{\zVec\in\mathbb{R}_+^n}\norm{\AMat\zVec-\yVec}_1$ obeys
	\begin{align}
		\norm{\xVec-\xVec^\#}_1
		\leq 2\left(k+1+k\left(2k+3\right)\norm{\AMat^\dagger}_{1\rightarrow 1}\right)\norm{\eVec}_1,
		\label{eq:Theorem:Main}
	\end{align}
	where $\AMat^\dagger$ is the Moore-Penrose inverse of $\AMat$.
\end{Theorem}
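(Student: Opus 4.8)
The plan is to read off the structural claims (1)--(3) directly from the observation that every block $\AMat_{s,q}=(S+1)^{-1}\PMat^{(s-1)(q-1)}$ is a scalar multiple of a permutation matrix. I index a generic column of $\AMat$ by its block-column $q\in\SetOf{Q}$ together with its position $j\in\SetOf{Q}$ inside that block. Since $\PMat$ is the cyclic shift with $\PMat_{q+1,q}=1$ and $\PMat_{1,Q}=1$, the $j$-th column of $\PMat^{k}$ is supported in the single row $\Mod{j+k}{Q}$; hence the column $(q,j)$ of $\AMat$ has, in each row-block $s\in\SetOf{S+1}$, exactly one nonzero entry, located in row $\Mod{j+(s-1)(q-1)}{Q}$ and equal to $(S+1)^{-1}$. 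This gives $S+1$ nonzeros per column each equal to $(S+1)^{-1}$, which is (1); the column sum $(S+1)(S+1)^{-1}=1$, which is (2); and by the symmetric count over the $Q$ block-columns, $Q$ nonzeros per row, which is (3).

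The heart of the argument, and the step I expect to be the main obstacle, is computing the coherence of the binary matrix $\BMat:=(S+1)\AMat$, i.e. $\lambda:=\max_{n\neq n'}\scprod{\BMat^{n}}{\BMat^{n'}}$. Two distinct columns $(q,j)$ and $(q',j')$ contribute a shared one precisely in those row-blocks $s$ with $\Mod{j+(s-1)(q-1)}{Q}=\Mod{j'+(s-1)(q'-1)}{Q}$, equivalently $(s-1)(q-q')\equiv j'-j\pmod Q$. If $q=q'$ then $j\neq j'$ and there is no solution, so the inner product is $0$; if $q\neq q'$ then $q-q'\not\equiv 0\pmod Q$ is invertible because $Q$ is \emph{prime}, so the congruence determines $s-1$ uniquely modulo $Q$, and as $s-1$ runs through the $S+1\leq Q$ \emph{distinct} residues $\{0,\dots,S\}$ at most one value is admissible. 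Thus any two distinct columns overlap in at most one row, and choosing $q\neq q'$, $j=j'$ forces $s=1$ and exhibits an overlap of exactly one; therefore $\lambda=1$. Both primality of $Q$ and the hypothesis $S<Q$ are essential here.

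With $D=S+1$ ones per column and $\lambda=1$, the order produced by \thref{Theorem:mutual_coherence=>RNSP} is $\lceil(D-1)/\lambda\rceil=S$, which immediately yields (4), namely that $\BMat=(S+1)\AMat$ is $S$-disjunct, and also that $\BMat$ has the robust null space property of order $S$ with $\rho=\frac{S}{2(S+1)-S}=\frac{S}{S+2}$ and $\tau_{\BMat}=\frac{S(2S+3)}{S+2}\norm{\BMat^\dagger}_{1\rightarrow 1}$. To pass to $\AMat$ itself I would use $\norm{\BMat\vVec}_1=(S+1)\norm{\AMat\vVec}_1$ together with $\BMat^\dagger=(S+1)^{-1}\AMat^\dagger$: the null-space inequality for $\BMat$ becomes the same inequality for $\AMat$ with the identical $\rho=\frac{S}{S+2}$ and with $\tau=(S+1)\tau_{\BMat}=\frac{S(2S+3)}{S+2}\norm{\AMat^\dagger}_{1\rightarrow 1}$, which verifies hypothesis (1) of \thref{Theorem:Recovery}.

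Finally, to obtain the recovery estimate (5) I would apply \thref{Theorem:Recovery} with $\tVec=\mathbf{1}\in\mathbb{R}^M$. By property (2) the column sums equal $1$, so $\AMat^{T}\mathbf{1}=\mathbf{1}>0$, giving $\kappa=1$ and $\kappa\rho=\frac{S}{S+2}<1$, so hypothesis (2) holds as well. Substituting $\kappa=1$, $\rho=\frac{S}{S+2}$, $\max_{n}\abs{(\AMat^{T}\tVec)_n}=1$ and $\norm{\tVec}_\infty=1$ into the error bound, the factor $\frac{1+\kappa\rho}{1-\kappa\rho}$ collapses to $S+1$ and $\frac{2}{1-\kappa\rho}$ to $S+2$, so the two contributions become $S+1$ and $(S+2)\cdot\frac{S(2S+3)}{S+2}\norm{\AMat^\dagger}_{1\rightarrow 1}=S(2S+3)\norm{\AMat^\dagger}_{1\rightarrow 1}$, reproducing exactly \eqref{eq:Theorem:Main}. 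The only remaining work is the routine arithmetic simplification of these constants.
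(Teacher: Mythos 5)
Your proposal is correct and follows essentially the same route as the paper's proof: pass to the binary matrix $\BMat=(S+1)\AMat$ with $D=S+1$ ones per column and coherence $\lambda=1$, invoke Theorem \ref{Theorem:mutual_coherence=>RNSP} to get $S$-disjunctness and the robust null space property with $\rho=\frac{S}{S+2}$ and $\tau=\frac{S(2S+3)}{S+2}\norm{\AMat^\dagger}_{1\rightarrow 1}$ after rescaling, and then apply Theorem \ref{Theorem:Recovery} with $\tVec=\mathbf{1}$, $\kappa=1$. The only notable difference is that where the paper simply cites \cite{BinaryMatrices} for the claim that distinct columns of $\BMat$ overlap in at most one row, you prove it from scratch via the congruence $(s-1)(q-q')\equiv j'-j \pmod{Q}$ and the primality of $Q$, which makes your argument more self-contained than the paper's.
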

\begin{proof}[Proof of \thref{Theorem:Main}]
	Statement 1 is clear. Since $\PMat$ is a permutation matrix, it has exactly one non-zero entry per row and column. Since $\AMat$ has exactly $\left(k+1\right)$ of those blocks per column and $Q$ of those blocks per row, statement 3 follows.
	Statement 2 follows from 1 and 3.
	Consider the matrix $\BMat:=\left(k+1\right)\AMat$ with columns $\BMat^n$ whose entries are either zero or one by construction. It has exactly $D=k+1$ ones per column. By \cite[Page~3015]{lotfi2020compressed} two distinct columns of $\BMat$ have a scalar product of at most one. Hence $\lambda=\max_{j,j'\in\SetOf{Q^2}:j\neq j'}\scprod{\BMat^j}{\BMat^{j'}}=1$, which is also the maximal number of common ones in two different columns. By \thref{Theorem:mutual_coherence=>RNSP} $\BMat$ is $k$-disjunct and has the robust null space property of order $k$ with constants 
	$\rho'=\frac{k}{\frac{2D}{\lambda}-k}
	=\frac{k}{k+2}$
	and $\tau'=\frac{k\left(\frac{2D}{\lambda}+1\right)}{\frac{2D}{\lambda}-k}\norm{\BMat^\dagger}_{1\rightarrow 1}
	=\frac{k\left(2k+3\right)}{k+2}\norm{\BMat^\dagger}_{1\rightarrow 1}$.
	Since $\BMat=\left(k+1\right)\AMat$, this yields that $\AMat$ has the robust null space property of order $j$ with constants $\rho:=\rho'=\frac{k}{k+2}$ 
	and 
	$\tau:=\left(k+1\right)\tau'=\frac{k\left(2k+3\right)}{k+2}\left(k+1\right)\norm{\BMat^\dagger}_{1\rightarrow 1}
	=\frac{k\left(2k+3\right)}{k+2}\norm{\AMat^\dagger}_{1\rightarrow 1}$.
	We set $\tVec_i:=1$ for all $i\in\SetOf{\left(k+1\right)Q}$. By statement 2 we get $\left(\AMat^T\tVec\right)_j=1>0$ for all $j\in\SetOf{Q^2}$ and $\AMat$ obeys the second requirement of \thref{Theorem:Recovery} with $\tVec$ and $\kappa=1$.
	Thus, we can apply \thref{Theorem:Recovery} and calculate
	\begin{align}
		&2\left(\frac{1+\kappa\rho}{1-\kappa\rho}\max_{j\in\SetOf{n}}\abs{\left(\AMat^T\tVec\right)_j}\norm{\tVec}_\infty+\frac{2}{1-\kappa\rho}\kappa\tau\right)
		=2\left(\frac{1+\rho}{1-\rho}+\frac{2}{1-\rho}\tau\right)
		\\=&2\left(\frac{1+\frac{k}{k+2}}{1-\frac{k}{k+2}}+\frac{2}{1-\frac{k}{k+2}}\tau\right)
		=2\left(k+1+\left(k+2\right)\tau\right)
		=2\left(k+1+k\left(2k+3\right)\norm{\AMat^\dagger}_{1\rightarrow 1}\right),
	\end{align}
	which yields the claim.
\end{proof}
\par
Suppose we fix a threshold $\epsilon>0$ that identifies infected persons, meaning that the $j$-th person is infected if and only if more than $\epsilon$ viruses are contained in the specimen of the $j$-th person, i.e. if and only if $\xVec_j>\epsilon$. In this case we can identify the infected persons even in the presence of small noise.
If $\norm{\eVec}_1<\frac{\epsilon}{4}\left(k+1+k\left(2k+3\right)\norm{\AMat^\dagger}_{1\rightarrow 1}\right)^{-1}$, we get
\begin{align}
	\norm{\xVec-\xVec^\#}_\infty
	\leq
	\norm{\xVec-\xVec^\#}_1
	\leq 2\left(k+1+k\left(2k+3\right)\norm{\AMat^\dagger}_{1\rightarrow 1}\right)\norm{\eVec}_1
	<\frac{\epsilon}{2}.
\end{align}
Since $\xVec_j$ is either greater than $\epsilon$ or zero, we can deduce that $\xVec_j> \epsilon$ happens if and only if 
$\xVec^\#_j> \frac{\epsilon}{2}$.
After the tests we could declare that a person is infected if $\xVec^\#_j>\frac{\epsilon}{2}$, and healthy if this is not fulfilled.
This method would still detect the infected individuals in the presence of small noise.

\section*{Acknowledgments}
The work was supported by DAAD grant 57417688.
%PJ has been supported by DFG grant JU 2795/3.
BB has been supported by BMBF through the German Research Chair at AIMS, administered by the Humboldt Foundation.

\bibliographystyle{plain}
\bibliography{../StyleFile/bibtex}

\end{document}